\documentclass[letterpaper, 10 pt, conference]{ieeeconf}
\IEEEoverridecommandlockouts
\usepackage{amsmath,amssymb,amsfonts}
\usepackage{commath}
\usepackage{graphicx}
\usepackage{algorithm}
\usepackage{mathtools}
\usepackage{algpseudocode}  
\usepackage{tikz}
\usepackage{comment}
\usepackage{subcaption}
\usepackage{booktabs} 
\newtheorem{theorem}{Theorem}
\newtheorem{assumption}{Assumption}
\newtheorem{proposition}{Proposition}
\newtheorem{problem}{Problem}

\usepackage{multirow}
\usepackage{siunitx}
\title{\LARGE \bf Bandwidth reduction methods for packetized MPC over lossy networks}
\author{Alberto Mingoia$^{1,2}$ \and Matthias Pezzutto$^{1}$ \and Fernando S Barbosa$^{2}$ \and David Umsonst$^{2}$
\thanks{$^{1}$Department of Information Engineering, University of Padova, Italy, 
        {\tt\small alberto.mingoia@studenti.unipd.it, matthias.pezzutto@unipd.it}}%
\thanks{$^{2}$Ericsson Research, Sweden, 
        {\tt\small \{alberto.mingoia, fernando.dos.santos.barbosa, david.umsonst\}@ericsson.com}}%
}

\begin{document}
\maketitle
\begin{abstract}
    We study the design of an offloaded model predictive control (MPC) operating over a lossy communication channel. 
    We introduce a controller design that utilizes two complementary bandwidth-reduction methods.
    The first method is a multi-horizon MPC formulation that decreases the number of  optimization variables, and therefore the size of transmitted input trajectories.
    The second method is a communication-rate reduction mechanism that lowers the frequency of packet transmissions.
    We derive theoretical guarantees on recursive feasibility and constraint satisfaction under minimal assumptions on packet loss, and we establish reference-tracking performance for the rate-reduction strategy.
    The proposed methods are validated using a hardware‑in‑the‑loop setup with a real 5G network, demonstrating simultaneous improvements in bandwidth efficiency and computational load.
\end{abstract}

\section{Introduction}
As 5G technology matures and standardization of 6G technology is ramping up, large-scale wireless control applications are becoming feasible in domains such as smart cities \cite{Smart_Cities} and Industry 4.0 \cite{Industry_4.0}. In these settings, multiple agents share a constrained and imperfect wireless medium, and packet drops, delays, and bandwidth limits become critical to closed-loop control performance. Consequently, the controller design must account for communication constraints \cite{PEZZUTTO2024100972}.

Model predictive control (MPC) is a natural fit for wireless control because it provides a trajectory of optimal inputs, which, if sent in its entirety, can enable local buffering and graceful operation during packet losses \cite{QUEVEDO20121803}. However, sending long trajectories increases communication load, exposing a trade-off between bandwidth consumption and closed-loop performance and robustness. Existing approaches often pose restrictive assumptions on network models, such as bounded delays \cite{Bounded_delays} and having a perfect acknowledgment mechanism \cite{Perfect_ACK}, or do not take into account communication and computational overhead \cite{Remote_MPC,Tube_MPC}.

This work aims to reduce the bandwidth utilized by communication while maintaining performance and safety under minimal network assumptions. We propose two complementary strategies: (i) a transmission-rate reduction policy that adapts number of packets sent between the plant and the controller, and (ii) a reduction in input-trajectory size through multi-horizon MPC \cite{MH_MPC} that exploits models of different granularity, allowing to reduce the size of control packets while also lowering the solve time of the MPC problem. We propose a terminal set design that guarantees constraint satisfaction and recursive feasibility over unreliable channels. We prove that convergence to any admissible constant reference can also be guaranteed in the case where the horizon of the MPC problem is chosen to be uniform. The two strategies lead to simultaneous gains in bandwidth efficiency and computational time. The control solution is implemented and validated with simulations-in-the-loop over a real 5G network.

The remainder of the paper is organized as follows. In Section~\ref{sec:problem formulation} we present the problem formally, while Section~\ref{sec:preliminaries} introduces the necessary preliminaries to follow the paper, namely standard and multi-horizon MPC and tracking of steady states. In Section~\ref{sec:multi_horizon_mpc} a recursively feasible multi-horizon MPC for tracking constant references is derived. In Section~\ref{sec:bw_aware_control} we present the novel control algorithm. In Section~\ref{sec:Experiments} the results are evaluated over a real 5G network with a hardware in the loop setup.

\textit{Notation}:
We denote an $n\times m$ matrix of zeros as $\mathbf{0}_{n\times m}$, the $n\times n$ identity matrix as $I_{n}$, and the modulo (or remainder) operator with $\mathrm{mod}$. The Euclidean norm operator is denoted with $\norm{\cdot}$, and ${\norm{v}_{P}= v^TPv}$, where $v^T$ represents the transposed column vector $v$. Let ${x\in\mathbb{R}^n}$ and $u\in\mathbb{R}^m$ be real-valued column vectors, then $(x,u)\in\mathbb{R}^{n+m}$ represents the stacked column vector. Given a polytope ${\mathcal{P} = \{(x,u) \in \mathbb{R}^{n+m} | P_xx+P_u u \leq P_c \}}$ then its projection on the $x$ space is given by ${Proj_x(\mathcal{P}) = \{x\in \mathbb{R}^n \ | \ \exists u \in \mathbb{R}^m \ : \ P_xx+P_u u \leq P_c \}}$.

\section{Problem Definition}\label{sec:problem formulation}
The closed-loop system is  divided into local (smart actuator and plant) and cloud (estimator and controller) components (see Fig.~\ref{fig:setup}). Local and cloud components are connected through a wireless network.
\begin{figure}
    \centering
    \includegraphics[width=\linewidth]{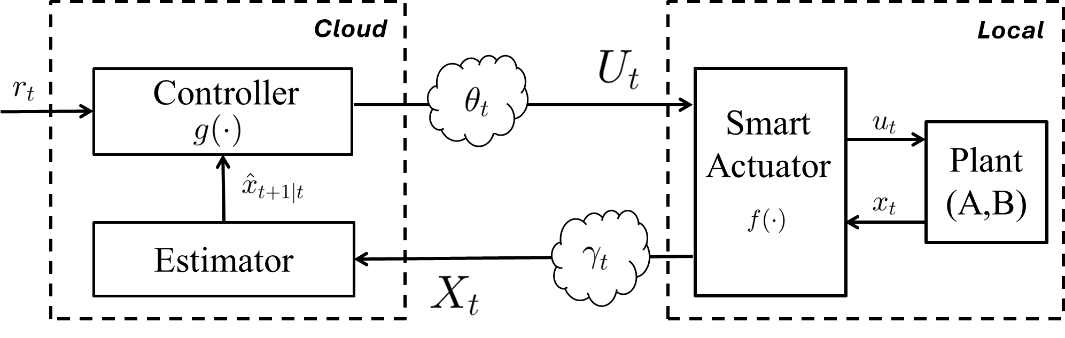}
    \caption{Components of the networked control system.}
    \label{fig:setup}
\end{figure}
On the local side, we consider a discrete-time linear time-invariant plant
\begin{equation}\label{eq:sys}
    x_{t+1} = Ax_t + Bu_t
\end{equation}
with state $x_t \in \mathbb{R}^{n_x}$ and input $u_t \in \mathbb{R}^m$. 
The system is subject to a set of constraints, $x_t\in\mathcal{X}$ and $u_t\in\mathcal{U}$, where
\begin{equation}
\label{eq:constraints}
  \mathcal{X} = \left\{ x\in\mathbb{R}^{n_x} : P_x x \le p_x \right\},\ 
  \mathcal{U} = \left\{ u : P_u u \le p_u \right\},
\end{equation}
with $P_x \in \mathbb{R}^{c_x\times n_x}, p_x \in \mathbb{R}^{c_x}$, $P_u \in \mathbb{R}^{c_u\times m}$ and $p_u\in \mathbb{R}^{c_u}$ where $c_x$, $c_u$ are the number of state and input constraints.

The smart actuator is endowed with limited computational power and determines the input $u_t$ to be applied to the plant at each time-step based on packets received from the cloud and the state of the plant. 
The smart actuator resides next to the plant and therefore directly measures its state.
Furthermore, it timestamps and transmits packets to the cloud. 
On the cloud side, an estimator estimates the state of the plant based on received packets, while a controller uses the estimates and the reference $r_t$ to determine the control inputs sent to the local side.

 The remote controller sends packets $U_t$ during the period $(t-1,t)$, while the local side sends packets $X_t$ during ${(t,t+1)}$. The network is modeled by random binary variables $\theta_t$ and $\gamma_t$.  The variable $\theta_t \in \{0,1\}$ is equal to one if the packet $U_t$ is available at the local side at time $t$, and zero otherwise. The variable $\gamma_t \in \{0,1\}$ is equal to one if $X_t$ is available on the cloud side to compute $U_{t+1}$, and zero otherwise. Note that these random variables do \textit{not specify why} a packet is unavailable at time $t$; the packet may have been dropped or arrived too late due to network latency or high processing time. The following assumption on the network is made.
\begin{assumption} \label{ass:network}
Over an infinite period of time, there exists an infinite number of two successful consecutive transmissions from plant to controller and controller to plant, i.e. $\forall~t~\in~\mathbb{Z},\ \exists\bar t  \geq t \text{ s.t. } \gamma_{\bar t}=1 \ \text{and} \ \theta_{\bar t+1}=1 $.
\end{assumption}

Note that Assumption \ref{ass:network} does not include any restrictions on the delay or packet loss distributions and is, therefore, a very mild assumption on the network. The problem addressed in this paper can be formulated as follows.

\begin{problem}\label{problem}
    Design a control law and smart actuator law to track a reference signal $r_t$ while enforcing the constraints \eqref{eq:constraints}, communicating over a network fulfilling Assumption \ref{ass:network}.
\end{problem}
\section{Preliminaries}\label{sec:preliminaries}
As the proposed solution is based on MPC for tracking \cite{MPC_for_tracking} and multi-horizon MPC \cite{MH_MPC}, a summary of the required background is presented in this section.

\subsection{Model Predictive Control}
In MPC, given a state measurement $x_t$, the optimal control problem (OCP) 
\begin{equation}
\label{eq:general_MPC_formulation}
\begin{aligned}
    &\min_{\substack{\mathbf{u}_t,\mathbf{x}_t}} \sum_{k=0}^{N-1}\Big( \lVert x_{(t,k)}\rVert_{Q}^2 +\lVert u_{(t,k)} \rVert_{R}^2 \Big)+ \lVert x_{(t,N)}\rVert_{P}^2 \\
     \text{s.t.} \quad
&x_{(t,k+1)} = A x_{(t,k)} + B u_{(t,k)}, \quad   \\
& x_{(t,k)} \in \mathcal{X},\ u_{(t,k)} \in \mathcal{U}, \quad \forall k \in \mathbb{Z}_{0:N-1}, \\
&x_{(t,0)} = x_t, \quad x_{(t,N)} \in \mathcal{X}_f,
\end{aligned}
\end{equation}
is used to determine the control input $u_t$, where ${\mathbf{u}_t = (u_{(t,0)},\dots,u_{(t,N-1)})}$, and ${\mathbf{x}_t = (x_{(t,0)},\dots , x_{(t,N)})}$. The constraints specify, respectively, the plant dynamics, the state and input constraints, the initial conditions and the terminal constraints. 
The matrices $Q$, $R$, and $P$ in the cost function are user-defined and of appropriate dimensions.
The OCP~\eqref{eq:general_MPC_formulation} is typically solved at every time-step $t$ and its solution is an optimal input and state trajectory, respectively $\mathbf{u}_t^* = (u_{(t,0)}^*,\dots,u_{(t,N-1)}^*)$ and $\mathbf{x}^*_t = (x^*_{(t,0)},\dots , x^*_{(t,N)})$.
The value $N$ is called the \textit{horizon} of the MPC problem and is a parameter that corresponds to how many time-steps are predicted into the future. 
A typical use of the OCP is receding horizon MPC, in which only the first input $u^*_{(t,0)}$ of the input trajectory $\mathbf{u}_t^*$ is applied, i.e., $u_t = u^*_{(t,0)}$ , and then the optimization problem is solved again at time $t+1$.
The terminal set $\mathcal{X}_f$ is chosen as an invariant set to guarantee that the problem remains feasible for all time $t$. The terminal cost term depending on $x_{(t,N)}$ ensures stability \cite{MAYNE2000789}.
As a rule of thumb, a longer horizon allows for more satisfactory performance but leads to higher computational times.

\subsection{Multi-horizon MPC (MH-MPC)}
\label{sec:multi_horizon_mpc_prelim}
We now present the concept of multi-horizon MPC (MH-MPC) \cite{MH_MPC}, which shifts the tradeoff between performance and computational time in a favourable direction when choosing $N$. 
To this end, MH-MPC utilizes models with coarser temporal granularities as the horizon progresses. 
The prediction time is divided into sub-intervals, each one using an increasing sampling time. This reduces the amount of optimization variables and computational load, while considering the same total time interval for prediction, which we call \textit{time-horizon}.

To formalize the above, let us denote $\mathbb{H} = \{1,2,\dots ,\mathcal{H}\}$ as the set of $\mathcal{H}$ sub-intervals. 
The set $\mathbb{K}_i$ denotes the set of steps inside the $i$th sub-interval, and $|\mathbb{K}_i| = h_i$ its cardinality.  
To denote the horizon division we use $H = [h_1,h_2,\dots,h_\mathcal{H}]$. 
Each element of $H$ corresponds to the number of steps taken using a system which propagates the state by a multiple $i$ of the base sample time $T_s$: the first element $h_1$ is the number of steps taken with the base sample time $T_s$, $h_2$ corresponds to the number of steps taken with sample time $2T_s$ and so on (see Fig. \ref{fig:MH-division}).

In each sub-interval, a constant input is applied for $i$ steps on the original system dynamics \eqref{eq:sys} with sampling time $T_s$. 
The dynamics of each sub-interval $i$ can, thus, be constructed for $i=1$ as $A_1=A$ and $B_1=B$ and for $i>1$ as
\begin{equation} \label{eq:time_varying_matrices}
    A_i = A_1^{i}, \quad \text{and}\quad
    B_i = \sum_{j=0}^{i - 1} A_1^j B_1.
\end{equation}

A MH-MPC problem  $\mathcal{P_{MH}}(x_t)$ can be written as follows:
\begin{equation}\label{eq:MH-MPC_problem}
\begin{aligned}
 &\min \sum_{i \in \mathbb{H}} \sum_{k \in \mathbb{K}_i} \Big( \lVert x_{(t,k)}\rVert_{Q_i}^2 +\lVert u_{(t,k)} \rVert_{R_i}^2 \Big)+ \lVert x_{(t,N)}\rVert_{P}^2 \\
&\text{s.t.} \quad x_{(t,k+1)} = A_i x_{(t,k)} + B_i u_{(t,k)}, \quad \forall k \in \mathbb{K}_i, \ i \in \mathbb{H}, \\
&\qquad \ x_{(t,k)} \in \mathcal{X},\ u_{(t,k)} \in \mathcal{U}, \quad \forall k \in \mathbb{Z}_{0:N-1}, \\
&\qquad \ x_{(t,0)} = x_t \in \mathcal{X}, \quad x_{(t,N)} \in \mathcal{X}_f,
\end{aligned}
\end{equation}
where, similar to \cite{MH_MPC}, the cost matrices are defined as
\begin{equation}\label{eq:cost_matrices}    
Q_i = i Q, \quad 
    R_i = i R,
\end{equation}
with $Q$ and $R$ are user-defined as in the standard MPC, which we will denote as uniform-horizon (UH-MPC). 
The MH-MPC is a generalization of UH-MPC, with $H =[N]$  ~\eqref{eq:general_MPC_formulation}.

As the horizon progresses, the matrices $A_i$ and $B_i$ propagate the state holding the input constant for sampling time $iT_s$. Constraints are also enforced only at times $iT_s$. Hence, violations of constraints could occur between ${[}iT_s, (i+1)T_s{]}$. Also, guaranteeing recursive feasibility is a non-trivial task for MH-MPC due to not being able to re-utilize the previously computed input sequence (see \cite{MH_MPC}).

\begin{figure}\
    \centering
    \includegraphics[width=0.9\linewidth]{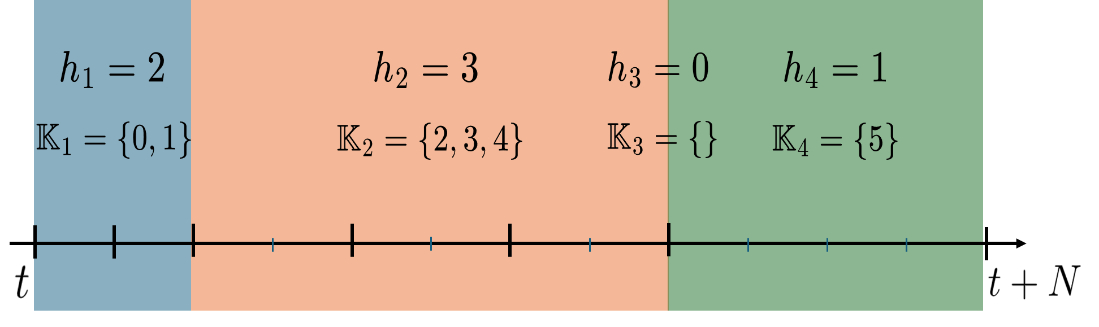}
    \caption{Multi-horizon division example with $H= [2,3,0,1]$}
    \label{fig:MH-division}
\end{figure}

\subsection{Tracking of steady states}
Since we want to design a controller that can track references, we now introduce background information on tracking steady-state values $(\bar{x},\bar{u})$, where $\bar x \in \mathbb{R}^{n_x}$, and $\bar u \in \mathbb{R}^m$. These values are used in MPC approaches for tracking, such as \cite{MPC_for_tracking}. 
The introduction of these variables allows constraint satisfaction and asymptotic evolution of the system to any reference which has an associated admissible steady-state, while also enlarging the domain of attraction of the controller.
If the reference is chosen such that no admissible steady-state pair leads to it, then the system will be steered to the closest admissible steady-state pair \cite{MPC_for_tracking}. 

A steady-state pair $(\bar{x},\bar{u})$ must satisfy
\begin{equation} \label{eq:steady_state_conditions}
    (A-I_{n_x}) \bar x +B \bar u  = \mathbf{0}_{n_x\times1}
\end{equation}
as well as the constraints \eqref{eq:constraints}, i.e, $\bar x \in \mathcal{X}$ and $\bar u \in \mathcal{U}$.

\begin{assumption}\label{ass:stabilizable}
    The pair (A,B) is stabilizable.
\end{assumption}

Assumption \ref{ass:stabilizable} guarantees the existence of a non-trivial solution to \eqref{eq:steady_state_conditions} and it allows us to introduce an ancillary feedback controller $K$ to compute the input $u_t$
\begin{equation}\label{eq:stab_control_law}
    u_t = K(\bar x - x_t) + \bar u, 
\end{equation}
where $K$ is designed such that $A-BK$ is Hurwitz.
We extend the state, in order to construct the autonomous system obtained by applying control law \eqref{eq:stab_control_law}:
\begin{equation} 
    w_t = (x_t, \bar x , \bar u), \quad w_{t+1} =A_e w_t, \label{eq:autonomous_sys}
\end{equation}
where 
\begin{equation} \label{eq: Ae matrix}
    A_e = \begin{bmatrix}
        A-BK & BK & B \\
        \mathbf{0}_{n_x\times n_x} & I_{n_x} & \mathbf{0}_{n_x \times m}\\
        \mathbf{0}_{m\times n_x} & \mathbf{0}_{m \times n_x} & I_{m}
    \end{bmatrix}.
\end{equation}
The constraints under the auxiliary law can be rewritten as $w_t \in W,$
\begin{equation} \label{eq:constraints_autonomous_system}
    W = \lbrace(x,\bar x, \bar u) : x\in\mathcal{X},\ K(\bar x-x)+\bar u)\in \mathcal{U}\rbrace.
\end{equation}

The maximal output admissible invariant set $\mathcal{O}_{\infty}$ \cite{Kolmanovsky} can be computed for the autonomous system \eqref{eq:autonomous_sys} and guarantees that, for any initial state $w_0 \in \mathcal{O}_{\infty}$, the evolution of the system \eqref{eq:autonomous_sys} remains within $\mathcal{O}_{\infty}$ and respects the constraints $W$ at each time step. 
The set is defined as 
\begin{equation} \label{eq:max_admissible_set}
    \mathcal{O}_{\infty} = \{ w= (x,\bar{x}, \bar{u}): A_e^kw\in W,  \ k\geq 0 \}.
\end{equation}
In general $\mathcal{O}_{\infty}$ is not finitely determined, but an approximation $\Tilde{O}_\infty$ can be computed as described in \cite{Remote_MPC,MPC_for_tracking, Kolmanovsky}.
\section{Multi-horizon MPC for tracking}
\label{sec:multi_horizon_mpc}
In the following, we develop a novel MH-MPC for tracking, which is used by the controller in Section~\ref{sec:bw_aware_control} to reduce the size of the control packets sent over the network.
Our design utilizes a new constraint inspired by \cite{Controlled_invariant_feasibility}, which enables us to prove recursive feasibility of our approach. 

Motivated by \cite{MPC_for_tracking}, we define the cost function for our MH-MPC for tracking as
\begin{equation}
\label{eq:MH-MPC_cost}
\begin{split}
V_{MH}(\mathbf{x}_t, \mathbf{u}_t, \bar{x}_t, \bar{u}_t, r_t) = \sum_{i \in \mathbb{H}} \sum_{k \in \mathbb{K}_i} \Big( \lVert x_{(t,k)} - \bar{x}_t \rVert_{Q_i}^2 +\\ + \lVert u_{(t,k)} - \bar{u}_t \rVert_{R_i}^2 \Big) 
+ \lVert x_{(t,N)} - \bar{x}_t \rVert_{P}^2 + \lVert \bar{x}_t - r_t \rVert_{T}^2
\end{split}
\end{equation}
where $P$ and $T$ are appropriately sized matrices, and $Q_i$ and $R_i$ are defined in \eqref{eq:cost_matrices}. 

Using the cost \eqref{eq:MH-MPC_cost}, our MH-MPC problem for tracking, $\mathcal{P}(x_t,r_t)$, is given by
\begin{equation}
\label{eq:MH-MPC_problem}
\begin{aligned}
&\min_{\mathbf{u}_t, \mathbf{x}_t , \bar x, \bar u} V_{MH}(\mathbf{x}_t, \mathbf{u}_t, \bar x_t, \bar u_t, r_t)\\
&\text{s.t.} \quad x_{(t,k+1)} = A_i x_{(t,k)} + B_i u_{(t,k)}, \quad \forall k \in \mathbb{K}_i, \ i \in \mathbb{H}, \\
& \qquad \ x_{(t,k)} \in \mathcal{X},\ u_{(t,k)} \in \mathcal{U}, \quad \forall k \in \mathbb{Z}_{0:N-1}, \\
& \qquad \ x_{(t,0)} = x_t, \quad x_{(t,N)} \in \mathcal{X}  \\
& \qquad \ (x_{(t,h_1)},\bar{x}_t,\bar{u}_t) \in \mathcal{\Tilde{O}}_\infty^{MH}, \quad \bar x_t , \bar u_t  \ \text{satisfy} \ \eqref{eq:steady_state_conditions}
\end{aligned}
\end{equation}

Compared to UH-MPC for tracking \cite{MPC_for_tracking},  MH-MPC \eqref{eq:MH-MPC_problem} does not enforce a constraint on the final state to be able to guarantee recursive feasibility, such as ${(x_{(t,N)},\bar{x}_t,\bar{u}_t)\in \mathcal{O}_\infty}$ in \cite{MPC_for_tracking}.
Instead, we place a constraint on the augmented system at time step $h_1$, i.e., $(x_{(t,h_1)},\bar{x}_t,\bar{u}_t) \in \mathcal{O}_\infty^{MH}$.

To derive the set $\mathcal{O}_{\infty}^{MH}$, we take inspiration from \cite{Controlled_invariant_feasibility}.
Let us first divide the MH-MPC problem defined by the horizon $H=[h_1,h_2, \dots, h_{\mathcal{H}}]$ into two parts.
The first part is defined by $H_1=[h_1]$ and we call the corresponding optimal control problem $\mathcal{P}_1$. 
Note that $\mathcal{P}_1$ is equivalent to UH-MPC with horizon $h_1$.
The second part is $H_2~=~[0,h_2,\dots, h_{\mathcal{H}}]$ with the corresponding problem $\mathcal{P}_2$.
The solutions of these problems are formed by vectors composed of
\begin{align}
\mathcal{P}_1:&
\{\mathbf{x}_t^{(1)}, \mathbf{u}_t^{(1)} \} := \nonumber\\
\{& (x_{(t,0)},\dots , x_{(t,h_1)}), (u_{(t,0)},\dots , u_{(t,h_1-1)}) \} \label{eq:problem_1}
\\
\mathcal{P}_2:& \{\mathbf{x}_t^{(2)}, \mathbf{u}_t^{(2)} \} := \nonumber\\
\{&(x_{(t,h_1)},\dots , x_{(t,N)}), (u_{(t,h_1)}, \dots , u_{(t,N-1)})\} \label{eq:problem_2}
\end{align}
Note that the two problems are coupled through $x_{(t,h_1)}$. 

The set of feasible states for $\mathcal{P}_2$ is given by
\begin{align}
&\mathcal{X}_{0}^{(2)} =\left\{ x_{(t,0)} \in \mathcal{X} \mid \exists \ \mathbf{u}_t^{(2)} \text{s.t.} \ u_{(t,k)}\in\mathcal{U}, \right. \\
&\left. A_i x_{(t,k)} + B_i u_{(t,k)} = x_{(t,k+1)} \in \mathcal{X}, \forall k \in \mathbb{K}_i, \ i \in \mathbb{H}_2 \right\},\nonumber
\end{align}
where $\mathbb{H}_2 = \lbrace 2, 3, \ldots, \mathcal{H}\rbrace$.
With polytopic constraints, $\mathcal{X}_{0}^{(2)}$ can be calculated using precursor sets (see \cite[Ch.10]{bemporad_book}). 

With $\mathcal{X}_{0}^{(2)}$ defined, we consider the autonomous system as described in \eqref{eq:autonomous_sys}, which evolves with the smallest sampling time. 
For the state to remain in $\mathcal{X}_{0}^{(2)}$ when using the controller \eqref{eq:stab_control_law} we introduce the constraint $w=(x, \bar{x}, \bar{u})\in W_{MH}$, with
\begin{align}
\label{eq:constraints_auto_sys_MH}
W_{MH} = \{ (x,\bar x, \bar u) :  
\, x \in \mathcal{X}_{0}^{(2)},\, K(\bar{x} - x) + \bar{u} \in \mathcal{U} \}.
\end{align}
Similar to \eqref{eq:max_admissible_set}, we define $\mathcal{O}_\infty^{MH}$ as 
\begin{equation}  \label{eq:max_admissible_set_MH}
        \mathcal{O}_{\infty}^{MH} = \{ w: A_e^kw\in W_{MH},  \ k\geq 0 \}.
\end{equation}
If $w_t\in \mathcal{O}_{\infty}^{MH}$ and we apply the auxiliary control law \eqref{eq:stab_control_law}, then $w_{t+1} \in W_{MH}$ and, thus, $x_{t+1}\in \mathcal{X}_{0}^{(2)}$.

With $\mathcal{O}_{\infty}^{MH}$ defined, let us now prove the recursive feasibility of \eqref{eq:MH-MPC_problem}.
\begin{proposition}
    The MH-MPC problem $\mathcal{P}(x_t,r_t)$ \eqref{eq:MH-MPC_problem} is feasible at for all times $t\geq t_0$ if it is feasible at $t_0$.
\end{proposition}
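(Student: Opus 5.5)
The plan is an induction on $t$. The step is to build, from an optimal (or merely feasible) solution of $\mathcal{P}(x_t,r_t)$, an explicit feasible candidate for $\mathcal{P}(x_{t+1},r_{t+1})$, where $x_{t+1}=Ax_t+Bu^*_{(t,0)}=x^*_{(t,1)}$ (nominal dynamics, first input applied). The reference $r_t$ enters only the cost, not the constraints, so a change of reference does not affect feasibility and can be ignored. The candidate keeps the same steady-state pair, $\bar x_{t+1}=\bar x_t$ and $\bar u_{t+1}=\bar u_t$. These still satisfy \eqref{eq:steady_state_conditions}.

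The first part of the candidate handles the fine segment. For $k=0,\dots,h_1-2$, set $u_{(t+1,k)}=u^*_{(t,k+1)}$. This gives $x_{(t+1,k)}=x^*_{(t,k+1)}$ for $k\le h_1-1$, so all these states and inputs satisfy $\mathcal{X}$ and $\mathcal{U}$. For the last fine step, apply the auxiliary law $u_{(t+1,h_1-1)}=K(\bar x_t-x^*_{(t,h_1)})+\bar u_t$. Because $w=(x^*_{(t,h_1)},\bar x_t,\bar u_t)\in\mathcal{O}_\infty^{MH}$, the case $k=0$ of \eqref{eq:max_admissible_set_MH} gives $w\in W_{MH}$, so this input lies in $\mathcal{U}$. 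The resulting state is $x_{(t+1,h_1)}=(A-BK)x^*_{(t,h_1)}+BK\bar x_t+B\bar u_t$, and $(x_{(t+1,h_1)},\bar x_t,\bar u_t)=A_e w$. Since $\mathcal{O}_\infty^{MH}$ is positively invariant under $A_e$ by its definition (shifting $k$), the new coupling constraint holds. Shifting $k$ also gives $x^*_{(t,h_1)}\in\mathcal{X}_0^{(2)}\subseteq\mathcal{X}$, so the state $x_{(t+1,h_1-1)}=x^*_{(t,h_1)}$ is admissible. If $h_1=0$ the construction degenerates; I would assume $h_1\ge1$, as the constraint on $x_{(t,h_1)}$ suggests.

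The second part handles the coarse segments. Because $A_e w\in\mathcal{O}_\infty^{MH}\subseteq W_{MH}$, we have $x_{(t+1,h_1)}\in\mathcal{X}_0^{(2)}$. By the definition of $\mathcal{X}_0^{(2)}$, there exists an input sequence $\mathbf{u}^{(2)}$ for the coarse dynamics $(A_i,B_i)$, $i\in\mathbb{H}_2$, with admissible inputs and all subsequent states in $\mathcal{X}$, including $x_{(t+1,N)}\in\mathcal{X}$. Taking this sequence as the rest of the candidate satisfies every constraint of \eqref{eq:MH-MPC_problem}, so $\mathcal{P}(x_{t+1},r_{t+1})$ is feasible, and induction finishes the proof.

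The main obstacle is the decoupling. In MH-MPC the shifted tail of the previous coarse trajectory is not a valid candidate, since a coarse step cannot be split into fine steps. The argument must therefore rely entirely on the invariance of $\mathcal{O}_\infty^{MH}$ under $A_e$ together with the definition of $\mathcal{X}_0^{(2)}$, which supplies a fresh coarse tail from any point of that set. I would check two things carefully. First, $\mathcal{X}_0^{(2)}$ must be defined exactly so that its members admit a full feasible $\mathcal{P}_2$ continuation ending in $\mathcal{X}$. Second, the approximation $\tilde{\mathcal{O}}_\infty^{MH}$ used in the problem must also be positively invariant and contained in $W_{MH}$. Both properties are needed, and I would state them as the standing requirement on the approximation.
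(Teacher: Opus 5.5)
Your proof is correct and follows the same route as the paper's. You shift the first $h_1-1$ fine inputs, close the fine segment with the auxiliary law so that $(x_{(t+1,h_1)},\bar x_t,\bar u_t)=A_e w\in\mathcal{O}_\infty^{MH}$ by invariance, and then take a fresh coarse tail from the definition of $\mathcal{X}_0^{(2)}$. Your version is more careful than the paper's: you check via $W_{MH}$ that the junction input lies in $\mathcal{U}$, and you flag two requirements the paper glosses over by using the exact set, namely $h_1\ge 1$ and that $\tilde{\mathcal{O}}_\infty^{MH}$ be positively invariant and contained in $W_{MH}$.
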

 \begin{proof}
    If $\mathcal{P}(x_{t_0},r_{t_0})$ is feasible, then $(x_{(t_0,h_1)},\bar{x}_{t_0},\bar{u}_{t_0})$ $\in \mathcal{O}_\infty^{MH}$. 
    Using the auxiliary control law \eqref{eq:stab_control_law}, the choice ${x_{(t_0+1,h_1)}=(A-BK)x_{(t_0,h_1)}+BK\bar{x}_{t_0}+B\bar{u}_{t_0}}$, guarantees $(x_{(t_0+1,h_1)},\bar{x}_{t_0},\bar{u}_{t_0})\in\mathcal{O}_\infty^{MH}$. 
    Since $x_{(t_0+1,h_1)}\in \mathcal{X}_{0}^{(2)}$, a sequence of feasible inputs for the second part of the problem, $\mathcal{P}_2$, can always be found by the definition of $\mathcal{X}_{0}^{(2)}$. Thus, a feasible solution at time $t_0+1$ exists. Using induction this can be extended $\forall t\geq t_0.$
 \end{proof}

If the horizon is chosen $H = [N]$, the additional constraint becomes the commonly used terminal constraint \cite{Controlled_invariant_feasibility}. 
Hence, the MH-MPC for tracking \eqref{eq:MH-MPC_problem} is a generalization of the UH-MPC formulation for tracking.

\section{Bandwidth-aware control over lossy networks}\label{sec:bw_aware_control}
This section connects the MH-MPC for tracking presented in Section~\ref{sec:multi_horizon_mpc} with the components of the networked control system shown in Fig.~\ref{fig:setup}. A communication rate reduction algorithm is presented, extending the algorithm in \cite{Remote_MPC} with a communication rate parameter $n\in \mathbb{N}$, which denotes the time between packet sends as a multiple of the base sending rate. 
If $n=1$ and the horizon is chosen as uniform, the approach is equivalent to the method presented in $\cite{Remote_MPC}$. 

\subsection{Information sent over the network}
In order to continue operation in the case of packet loss, a trajectory of inputs is sent over the network, allowing the plant to store the received sequence as a buffer. The controller and plant packets are defined as
\begin{equation}
    U_t = \{\mathbf{u}^{*(1)}_t, \bar{x}_t, \bar{u}_t, q_t\} \ , \quad X_t = \{x_t, s_t\},
\end{equation} respectively,
where $\mathbf{u}^{*(1)}_t$, $\bar{x}^*_t$, and $\bar{u}^*_t$ are obtained as \textit{part} of the solution of Problem \eqref{eq:MH-MPC_problem}. Recall that $\mathbf{u}^{*(1)}_t$ represents the first $h_1$ inputs of the optimal input trajectory of \eqref{eq:MH-MPC_problem}. 
Note that an input trajectory of length $h_1$ instead of $N$ is sent, which reduces the bandwidth consumption of our approach compared to \eqref{eq:MH-MPC_problem}.
Variables $q_t \in \mathbb{N}$ and $s_t\in \mathbb{N}$ are discrete time-stamps used by the smart actuator to determine the input to apply to the plant.
The controller packet $U_t$ is sent from time instances $(t-1, t)$, while the plant packet is sent from time $(t,t+1)$.
The \textit{communication-rate parameter} $n \in \mathbb{N_+}$ controls how often each side transmits a packet, expressed as a multiple of the base sample time $T_s$. 

\subsection{Cloud components: Controller \& Estimator}
Every $n$ steps the controller solves and sends the solution to the problem $\mathcal{P}(\hat{x}_{t+1}, r_t)$, where $\hat{x}_{t+1}$ is state estimate at time $t+1$. 
Estimating the state at time $t+1$ is necessary due to delays introduced by communication and computation. 
If the packet from plant to controller has been received successfully at time $t$, i.e. $\gamma_t = 1$, then the state of the plant at time $t$ is available, and the input the plant is applying can be derived from $s_t$.
Hence, we are able to estimate the state perfectly with a one-step ahead prediction. If the packet does not arrive, some assumptions have to be made to estimate the state. 
As in \cite{Remote_MPC}, the estimator assumes that all packets sent from the controller to the plant have arrived successfully - i.e. the communication link is assumed perfect.
We can write the above as:
\begin{align}
    \hat{t} &= t-t\,\mathrm{mod}\,n \\
    \hat{u}_{t|t} &= \gamma_t u_t + (1 - \gamma_t)  u_{(\hat{t},\,t\,\mathrm{mod}\,n)} \\
    \hat{x}_{t|t} &= \gamma_t x_t + (1 - \gamma_t) \hat{x}_{t|t-1} \\ 
    \hat{x}_{t+1|t} &= A\,\hat{x}_{t|t} + B\,\hat{u}_{t|t}.
\end{align}
The variable $\hat{t}$  denotes the instants in which the MPC problem is being computed and the packets sent. 
Since we send packets every $n$ steps, $U_t$ is transmitted when $t \,\mathrm{mod}\, n= 0$. 
The estimator and controller logic is shown in Algorithm \ref{alg:estimator}.

\begin{algorithm}
  \caption{Controller and Estimator algorithm}
  \begin{algorithmic}[1] 
    \State \textbf{Initialize:} $n$
    \For{$t = 0 \to \infty$}
      \State $\hat{t} \gets t-t\,\mathrm{mod}\,n$
      \State $\hat{u}_{t|t} \gets \gamma_t u_t + (1 - \gamma_t)  u_{(\hat{t},\, t\,\mathrm{mod}\,n)}$ \label{alg:input-predict}
      \State $\hat{x}_{t|t} \gets \gamma_t x_t + (1 - \gamma_t) \hat{x}_{t|t-1}$\label{alg:curr_state}
      \State $\hat{x}_{t+1|t} \gets A\,\hat{x}_{t|t} + B\,\hat{u}_{t|t}$ \label{alg:predict}
      \If{$t\,\mathrm{mod}\,n == 0$}
        \State $q_t \gets \gamma_t t + (1 - \gamma_t) q_t$ \label{alg:q_t_update}
        \State \textit{Solve} \eqref{eq:MH-MPC_problem}: $\mathcal{P}(\hat{x}_{t+1|t}, r_t)$ \label{alg:solve}
        \State \textit{Send packet} \label{alg:send}
      \EndIf
    \EndFor
  \end{algorithmic}
   \label{alg:estimator}
\end{algorithm}

\subsection{Local components: Smart Actuator \& Plant}
On the plant side, the \textit{smart actuator} determines the input to apply to the plant.
If a packet does not arrive, the smart actuator resorts to using inputs from the last valid packet received.
If a packet arrives, then a consistency check is performed, to determine if the optimal input sequence has been computed with a correct estimate. 
If the check is successful, the packet is considered valid, else it is discarded.

To determine consistency, a variable $\Theta_t$ is used:
\begin{equation}
\Theta_t =
\begin{cases} 
\prod_{k = 0}^{\lfloor(t-q_t-1)/n\rfloor} \theta_{q_t+1+kn}  & \text{if } \theta_t = 1, \\
0 & \text{otherwise}.
\end{cases}
\end{equation}
Since packets are sent only every $n$ steps, consistency will be obtained whenever the consistency variable $\Theta$ was equal to 1 less than n steps ago. 
\begin{proposition}\label{prop:consistency}
     If \( \Theta_{t-(t\,\mathrm{mod}\,n)}= 1\) then \( x_t = \hat{x}_{t|t-1}\).
\end{proposition}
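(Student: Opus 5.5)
Set $\hat t = t-(t\,\mathrm{mod}\,n)$. Since $\hat x_{t|t-1} = A\hat x_{t-1|t-1}+B\hat u_{t-1|t-1}$ and $x_t = Ax_{t-1}+Bu_{t-1}$, it suffices to prove the stronger invariant
\[
\hat x_{\tau|\tau}=x_\tau \quad\text{and}\quad \hat u_{\tau|\tau}=u_\tau \qquad \text{for all } \tau\in\{q,\dots,t-1\},
\]
where $q=q_{\hat t}$ is the timestamp carried by the packet checked at $\hat t$. The argument is an induction forward in time from $q$.

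\textbf{Unwinding the hypothesis.} The condition $\Theta_{\hat t}=1$ gives $\theta_{\hat t}=1$ and $\theta_{q+1+kn}=1$ for every $k\le\lfloor(\hat t-q-1)/n\rfloor$. So every controller packet sent after time $q$ up to $\hat t$ arrived. By line~\ref{alg:q_t_update} of Algorithm~\ref{alg:estimator}, $q$ is the last time the cloud received a plant packet ($\gamma_q=1$). Hence the cloud knows $x_q$ exactly, and through $s_q$ it knows the input the actuator applies at $q$. This gives the base case: $\hat x_{q|q}=x_q$ and $\hat u_{q|q}=u_q$ by lines~\ref{alg:input-predict}--\ref{alg:curr_state}.

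\textbf{Inductive step.} For $\tau\in(q,t-1]$, either $\gamma_\tau=1$, in which case the estimate is exact trivially, or $\gamma_\tau=0$. In the second case the estimator sets $\hat x_{\tau|\tau}=\hat x_{\tau|\tau-1}$, which equals $x_\tau$ by the inductive hypothesis and one step of the dynamics. It also sets $\hat u_{\tau|\tau}=u_{(\hat\tau,\tau\bmod n)}$, which is the input the estimator assumes was applied, namely the one from the most recently sent packet. The products in $\Theta$ guarantee that this packet reached the actuator. The actuator runs the same consistency recursion, so it accepted the packet and applies exactly $u_{(\hat\tau,\tau\bmod n)}$. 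Hence $\hat u_{\tau|\tau}=u_\tau$.

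\textbf{Closing and the main obstacle.} For $\tau\in[\hat t,t)$, the hypothesis $\Theta_{\hat t}=1$ says the packet at $\hat t$ is valid. The actuator therefore uses its entries $t\bmod n$ steps in, which matches line~\ref{alg:input-predict}; this explains why the condition is imposed at $\hat t$ rather than at $t$. Applying the invariant at $\tau=t-1$ gives $x_t=\hat x_{t|t-1}$. I expect the hardest step to be the index bookkeeping. One must check that the indices $q+1+kn$ in $\Theta$ are exactly the send instants relevant between $q$ and $\hat t$, in particular when $q$ is not a multiple of $n$. One must also check that a packet deemed valid implies the actuator's buffer index coincides with $\tau\bmod n$ as used by the estimator. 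I would handle both by writing the actuator rule explicitly as ``use the last valid packet, offset by elapsed time'' and comparing it with $u_{(\hat\tau,\tau\bmod n)}$ case by case.
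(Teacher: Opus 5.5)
Your proposal is correct at the paper's level of detail and follows the same route. The paper anchors at the last instant $\tau$ with $\gamma_\tau=1$. It writes both $\hat x_{t|t-1}$ and $x_t$ as the same propagation $A^L x_\tau+\sum_{l=0}^{L-1}A^{L-1-l}Bu_{(\tau+\lfloor l/n\rfloor n,\,l\,\mathrm{mod}\,n)}$, using $\Theta=1$ to say every controller packet since $\tau$ was received and applied; this is your forward induction unrolled into closed form. Your version is slightly more careful, since it works with $\Theta_{t-(t\,\mathrm{mod}\,n)}$ as the statement requires, while the paper's proof silently uses $\Theta_t$. The index mismatches you flag in the definition of $\Theta$ are left unresolved in the paper as well.
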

\begin{proof}
Let $\tau \leq t$ be such that $\gamma_\tau=1$ and $\gamma_{\tau+l}=0$ for $0 \leq l <L=t-\tau$, in other words the last packet has been received by the controller $L$ steps ago. It follows that ${q_t = \tau}$ and $\hat{x}_{t|t-1} = A^Lx_\tau + \sum^{L-1}_{l=0}A^{L-1-l}Bu_{(\tau+ \lfloor\frac{l}{n}\rfloor n, l \,\mathrm{mod}\, n)}$.  $\Theta_t =1$ by definition implies that all the packets sent by the controller since time $\tau$ have been received and used. This means $x_t =A^Lx_\tau + \sum^{L-1}_{l=0}A^{L-1-l}Bu_{(\tau+ \lfloor\frac{l}{n}\rfloor n, l \,\mathrm{mod}\, n)}$. Hence, $\Theta_t = 1$ implies $x_t = \hat{x}_{t|t-1}$. 
\end{proof}

The variable $s_t$, representing the time at which the last consistent packet was received, is updated with ${s_{t+1} = \Theta_t t + (1 - \Theta_t) s_t}$. Note that $s_t$ changes at most every $n$ steps, since $\Theta_t = 0$ when $t \,\mathrm{mod}\,  n\neq 0.$ The smart actuator law becomes:
\begin{equation} \label{eq:smart_actuator_law}
    u_t =
\begin{cases} 
u_{(s_t,t-s_t)} & \text{if } t - s_t < h_1, \\
\bar{u}_{s_t} + K(x_t - \bar{x}_{s_t}) & \text{otherwise}.
\end{cases}
\end{equation}

If the plant runs out of buffered inputs it resorts to applying an auxiliary control law, from which the set $\mathcal{O}^{MH}_\infty$ is also defined. 
This has the same bandwidth requirements as using as base sampling time $nT_s$ but allows for better performance with respect to controlling the system with an increased sample time $nT_s$ since the input can change every $T_s$. 
This is validated experimentally in Section \ref{sec:Experiments}.
Finally, the smart actuator transmits $X_t$ when $(t-1) \,\mathrm{mod}\, n~=~0$.
\subsection{Theoretical properties}
We now formalize how constraint satisfaction and recursive feasibility are guaranteed over an arbitrary network.
\begin{theorem}\label{prop:recursive_feasibility}
    Given Assumption \ref{ass:network}, there exists a time instant $t_0 \ s.t. \ t_0 \,\mathrm{mod}\,n =0$ where a successful back to back transmission occurs, i.e.  $\gamma_{t_0-1} = 1$, $\theta_{t_0} = 1$. Assume the optimization problem $\mathcal{P}(x_{t_0}, r_{t_0})$ defined as in \eqref{eq:MH-MPC_problem} is feasible. Then, if controller and estimator logic is chosen as in Algorithm \ref{alg:estimator}, and the smart actuator law as in \eqref{eq:smart_actuator_law}, then $\mathcal{P}(\hat{x}_{t|t-1}, r_{t})$ is feasible and $x_t\in \mathcal{X},u_t \in \mathcal{U}, \ \forall t\geq t_0$.
\end{theorem}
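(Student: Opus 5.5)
The plan is an induction over the "consistent" time instants at which the smart actuator accepts a packet, i.e. the instants where $s_t$ is updated. The key object is the pair $(x_{(s,h_1)},\bar x_s,\bar u_s)$ from the last valid packet, together with the invariance of $\mathcal{O}_\infty^{MH}$ under $A_e$. First I would justify the existence of $t_0$. Assumption \ref{ass:network} gives infinitely many $\bar t$ with $\gamma_{\bar t}=1,\theta_{\bar t+1}=1$. Since the smart actuator transmits only when $(t-1)\,\mathrm{mod}\,n=0$ and the controller only when $t\,\mathrm{mod}\,n=0$, any such useful pair is aligned with $t_0=\bar t+1$, a multiple of $n$. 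At $t_0$ the estimate is exact, $\hat x_{t_0|t_0-1}=x_{t_0}$, so the solved problem is $\mathcal{P}(x_{t_0},r_{t_0})$, which is feasible by hypothesis. Moreover $\Theta_{t_0}=1$, so $s_{t_0+1}=t_0$.

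\textbf{Step 1 (constraint satisfaction between valid packets).} Let $s$ be the time of the last valid packet, and suppose $\mathcal{P}(\hat x_{s|s-1},r_s)$ was feasible with $\hat x_{s|s-1}=x_s$ (Proposition \ref{prop:consistency}). For $t\in[s,s+h_1)$ the actuator applies $u_{(s,t-s)}$. The plant then follows exactly the first $h_1$ predicted states of $\mathcal{P}_1$, since the first sub-interval uses $A_1=A$, $B_1=B$. Hence $x_t\in\mathcal{X}$ and $u_t\in\mathcal{U}$, and $x_{s+h_1}=x_{(s,h_1)}$. From $s+h_1$ on, the law \eqref{eq:smart_actuator_law} is the auxiliary law with $(\bar x_s,\bar u_s)$, so $w_t=(x_t,\bar x_s,\bar u_s)$ evolves by $A_e$ starting in $\mathcal{O}_\infty^{MH}$. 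By \eqref{eq:max_admissible_set_MH}, it stays in $W_{MH}$. This gives $x_t\in\mathcal{X}_0^{(2)}\subseteq\mathcal{X}$ and $K(\bar x_s-x_t)+\bar u_s\in\mathcal{U}$ for all subsequent $t$ until a new valid packet is accepted. (I would note the sign convention of $K$ in \eqref{eq:smart_actuator_law} versus \eqref{eq:stab_control_law} and read them consistently.)

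\textbf{Step 2 (feasibility of the controller problem).} For every $t\ge t_0$, I would show $\mathcal{P}(\hat x_{t|t-1},r_t)$ is feasible. The estimator assumes a perfect downlink, so $\hat x_{t|t-1}$ is the state predicted by concatenating the controller's own sent solutions from the last received measurement. I would argue that this is the \emph{same} kind of trajectory analysed in the proposition on recursive feasibility of \eqref{eq:MH-MPC_problem}. Each new problem is posed at a state reached by applying the first $n$ inputs of the previous solution, or auxiliary inputs beyond $h_1$. The shifted candidate from that proof remains feasible: keep the remaining buffered inputs, append the $K$-law, take $x_{(t,h_1)}$ as the $A_e$-propagation, and complete $\mathcal{P}_2$ using $x_{(t,h_1)}\in\mathcal{X}_0^{(2)}$. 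This candidate keeps $(\bar x,\bar u)$ unchanged, so \eqref{eq:steady_state_conditions} still holds. Iterating $n$ steps instead of one is just $n$ applications of that argument. When a measurement arrives, the estimate is reset to a true state. If the last packet was consistent, the true state equals the prediction, so nothing changes. If it was not, the true state is the one produced by the actuator from the last valid packet. By Step 1 that trajectory also has $(x_{\cdot},\bar x_s,\bar u_s)$ propagated by $A_e$ inside $\mathcal{O}_\infty^{MH}$ after the buffer, so the same shifted-candidate construction applies, anchored at $s$.

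\textbf{Step 3 (induction).} Combine Steps 1 and 2: every accepted packet is consistent, hence computed from the true state and feasible, so Step 1 applies again from the new $s$. The main obstacle is Step 2 in the inconsistent case, where the controller's estimate has diverged from the plant. There I must exhibit a feasible solution at an estimated state that was generated by a \emph{different} previous solution than the plant is using. I would handle it by tracking two chains, the controller-side predicted chain and the actuator-side realised chain, and showing each separately is an $A_e$/buffer propagation of some feasible solution. Feasibility then follows from the recursive-feasibility proposition applied to whichever chain defines $\hat x_{t|t-1}$.
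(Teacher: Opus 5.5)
Your proposal is correct and follows essentially the paper's argument. The paper's induction splits into three cases, which are exactly your controller-side chain, coinciding chains, and actuator-side chain: $\gamma_t=0$; $\gamma_t=1$ with $\Theta_{t-1}=1$; and $\gamma_t=1$ with $\Theta_{t-1}=0$, anchored at the last $\tau$ with $\Theta_\tau=1$ (reuse the remaining inputs of $\mathbf{u}_\tau$, then the auxiliary law, then $\mathcal{X}_0^{(2)}$ for $\mathcal{P}_2$). Your Step 1 also makes explicit the claim $x_t\in\mathcal{X}$, $u_t\in\mathcal{U}$ (via the buffer and $A_e$-invariance of $\mathcal{O}_\infty^{MH}$), which the paper leaves implicit. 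The only loose point is your alignment remark for $t_0$: with the stated send time $(\bar t-1)\,\mathrm{mod}\,n=0$ it gives $(\bar t+1)\,\mathrm{mod}\,n=2$, not $0$, for $n\ge 3$, but the paper likewise just takes $t_0$ as given.
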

\begin{proof}
By induction. The problem $\mathcal{P}(x_{t_0}, r_{t_0})$ is feasible at time $t_0$ by assumption. Assume that the problem is feasible for $\tau <t$. We have three distinct cases: $\gamma_t= 0$, $\gamma_t =1$ and $\Theta_{t-1} = 1$, $\gamma_t = 1$ and $\Theta_{t-1}~=~0$. When $\gamma_t = 0 $, then we can reuse the previous $h_1-1$ inputs from the previous shifted sequence and always find an input to append to it due to the constraint $(x_{(t,h_1)}, \bar x_t, \bar u_t) \in \mathcal{O}_\infty^{MH}$, hence the constraints related to $\mathcal{P}_1$ are fulfilled.  
Then, the constraints related to $\mathcal{P}_2$ can be fulfilled since $Proj_x(\mathcal{O}_\infty^{MH})\subseteq X_0^{(2)}$, so we can always find an admissible input sequence that respects the constraints. If $\gamma_t = 1$ and $\Theta_{t-1} = 1$, then by Proposition \ref{prop:consistency} we have $x_t = \hat{x}_{t|t-1}$, so we can use the same arguments as in the former case. When $\gamma_t = 1$ and $\Theta_t = 0$, then there exists a $\tau<t$ such that $\Theta_{t+l} = 0$, for $0<l \leq t-\tau=L$ and $\Theta_\tau=1$, so that $x_{\tau} = \hat{x}_{\tau|\tau-1}$. By assumption of the inductive argument, the problem is feasible at time $\tau$ and provides the sequence $\mathbf{u}_{\tau}$. Then, when $L<\hat{N}$ we can generate a admissible sequence by using the remaining $\hat{N}-L$ inputs from the sequence $\mathbf{u}_{\tau}$ and $L$ inputs from the auxiliary control law. Then, the remaining inputs can always be found since $X_0^{(2)} \in \mathcal{O}_\infty^{MH}$. If $L> \hat{N}$, then $\hat{N}$ inputs can be found from the auxiliary control law and the rest will be found for the aforementioned reason. 
\end{proof}

We now introduce some additional assumptions, commonly used in MPC.
\begin{assumption}\label{ass:MPC_assumptions}
    The following conditions hold: 
    \begin{itemize}
        \item $Q,R,T$ are positive definite;
        \item $K$ is a constant stabilizing gain for system \eqref{eq:sys};
        \item $P = (A+BK)^TP(A+BK) + Q + K^TRK$.
    \end{itemize}
\end{assumption}

Now we can show that the proposed approach also allows for reference tracking.
\begin{theorem}\label{prop:reference_tracking}
    Suppose that Assumptions \ref{ass:MPC_assumptions} and \ref{ass:network} hold, and the initial problem is feasible. Assume a uniform horizon $H = [N]$. Let r be such that $\bar x \in \mathcal{X}, \ \bar u \in \mathcal{U}$. If choosing the controller and estimator logic as shown in Algorithm \ref{alg:estimator}, and the smart actuator law as \eqref{eq:smart_actuator_law}, then $\lim_{t\to\infty} x_t \stackrel{a.s.}{=}r$.
\end{theorem}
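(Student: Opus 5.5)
The plan is to combine the standard Lyapunov argument for MPC for tracking \cite{MPC_for_tracking} with the consistency mechanism of Proposition~\ref{prop:consistency}. The network enters only through the fact that, under Assumption~\ref{ass:network}, consistent packets are applied infinitely often almost surely. I take $V^*(x,r)$, the optimal value of $\mathcal{P}(x,r)$ with $H=[N]$, as the candidate Lyapunov function. With a uniform horizon we have $h_1=N$ and $\mathcal{X}_0^{(2)}=\mathcal{X}$, so $\mathcal{O}_\infty^{MH}=\mathcal{O}_\infty$ and the problem coincides with the tracking MPC of \cite{MPC_for_tracking}. The goal is to show that $V^*(x_t,r)$ is non-increasing along the closed loop, whether the actuator uses a buffered input or the auxiliary law, and that it strictly decreases at consistent updates unless $x_t$ already equals the optimal artificial steady state $\bar x$.

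\textbf{Step 1 (monotonicity under open-loop/buffer operation).} Let $s_t$ be the last consistent packet time, and suppose the actuator applies $u_{(s_t,t-s_t)}$ or, once the buffer is exhausted, the auxiliary law $\bar u_{s_t}+K(x_t-\bar x_{s_t})$. I will use the shifted candidate $(u_{(s_t,t-s_t+1)},\dots,u_{(s_t,N-1)},K(\cdot))$ with the same $(\bar x_{s_t},\bar u_{s_t})$. By Assumption~\ref{ass:MPC_assumptions}, the terminal-cost Lyapunov equation for $P$ makes the cost of this shifted sequence decrease by at least $\|x_t-\bar x_{s_t}\|_Q^2+\|u_t-\bar u_{s_t}\|_R^2$. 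The same decrease holds once the buffer is exhausted, since then the closed loop follows the terminal controller and $P$ gives exact decrease. Feasibility of the shifted candidate is the content of Theorem~\ref{prop:recursive_feasibility} and the invariance of $\mathcal{O}_\infty$. Hence the value of the \emph{held} plan, $\tilde V_t$, is non-increasing.

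\textbf{Step 2 (consistent updates).} At a consistent update, Proposition~\ref{prop:consistency} gives $\hat x_{t|t-1}=x_t$, so the new solution is optimal for the true state and satisfies $V^*(x_t,r)\le\tilde V_t$. Combining Step 1 and Step 2, the sequence $\tilde V_t$ is non-increasing and bounded below, so it converges. Summing the decreases then yields $\|x_t-\bar x_{s_t}\|\to0$ and $\|u_t-\bar u_{s_t}\|\to0$.

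\textbf{Step 3 (artificial steady state converges to $r$).} This is the main obstacle. I would invoke the key lemma of \cite{MPC_for_tracking}: if $x$ equals the optimal artificial steady state $\bar x$ and $\bar x\neq r$ (with $r$ admissible), then a small move of $(\bar x,\bar u)$ toward $r$ remains feasible, because $(x,\bar x,\bar u)$ lies in the interior of $\mathcal{O}_\infty$ for strictly admissible steady states. Since $T\succ0$, this move strictly lowers the cost, contradicting optimality. The lemma only applies at instants where a fresh optimization is applied. Assumption~\ref{ass:network}, together with the almost-sure infinitude of back-to-back successes at multiples of $n$, ensures that such instants occur infinitely often. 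After such a success, $\Theta=1$ at the next sending instant, so these instants are indeed consistent updates.

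\textbf{Step 4 (conclusion).} Using continuity of $V^*$ (a parametric QP) and the gap from Step 3, I expect a uniform decrease $V^*\le\tilde V-\alpha(\|\bar x_{s_t}-r\|)$ at each consistent update, for some class-$\mathcal{K}$ function $\alpha$. Because these updates occur infinitely often almost surely, this forces $\bar x_{s_t}\to r$. Combined with $x_t-\bar x_{s_t}\to0$ from Step 2, this gives $x_t\to r$ almost surely.

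The delicate parts are the following. In Step 3, I must handle references on the boundary of the admissible set, probably by assuming strict admissibility or using a convexity argument as in \cite{MPC_for_tracking}. In Step 4, I must argue that the decrease accumulated over infinitely many, but possibly sparse, update instants diverges unless $\bar x_{s_t}\to r$. This follows from monotonicity of $\tilde V_t$: any persistent gap would produce infinitely many uniform decrements, which is incompatible with $\tilde V_t$ being bounded below.
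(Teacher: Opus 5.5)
Your proposal is correct and essentially matches the paper's proof. The paper only invokes Proposition~\ref{prop:consistency}, to guarantee that applied packets were computed from the true state, and then reuses Proposition~3 of \cite{Remote_MPC}; that cited proof is the tracking-MPC Lyapunov argument you write out (non-increasing held-plan cost, no increase at consistent re-optimizations, and the steady-state lemma of \cite{MPC_for_tracking} at the infinitely many consistent updates).

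One point to tighten: in Step~4 the decrement $\tilde V_t - V^*(x_t,r)\ge\alpha(\|\bar x_{s_t}-r\|)$ need not hold at every consistent update. It holds, up to a vanishing error, only at updates where $x_t$ is already close to $\bar x_{s_t}$, which Step~2 guarantees eventually; combine the two via compactness of the (tightened) admissible steady-state set and uniform continuity of $V^*$.
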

\begin{proof}
Thanks to Proposition \ref{prop:consistency}, and the fact that the variables $\theta_t$, $\gamma_t$ are updated at every time step we can re-use the proof from Proposition 3 of \cite{Remote_MPC}. 
\end{proof}

\section{Experiments} \label{sec:Experiments}
We validate the proposed bandwidth-aware MPC algorithms in a real-time 5G-in-the-loop setup. 

Experiments are conducted over a private 5G network provided by Ericsson Research \cite{KIP}. 
The controller is executed on a remote server, while the real-time plant simulation runs on a local workstation. The two systems are clock synchronized.
The plant is a nonlinear cart-pole system, implemented with PyBullet physics engine. The simulation runs at 200Hz to accurately capture the system dynamics. The system and its parameters, as well as the cost matrices for the controllers are chosen as described in \cite{Tube_MPC}.

The controller is implemented in Python using CVXPY \cite{cvxpy} and solved with interior-point solver CLARABEL \cite{Clarabel_2024}. 
The model of the system is obtained by linearizing around the upright equilibrium using a zero-order hold discretization with the sample time used for control. 

Three sets of experiments are conducted for evaluation: 1) Compare the communication rate reduction with simply increasing the sampling time. 2) Compare MH-MPC with UH-MPC. 3) Conduct a network load analysis over 5G network.

\subsection{Communication-rate reduction vs. increased sampling time}
We first present results with a fixed horizon and varying the \textit{communication rate parameter} $n$ algorithm introduced in Section \ref{sec:bw_aware_control}. As baseline we set $n=1$. We compare a naive strategy that increases the sampling time to $2T_s$ against using base sample time and $n=2$. The system is linearized to match the sampling time to ensure model consistency. All controllers use a uniform horizon of $H=[30]$.

\begin{figure}
\centering
\includegraphics[width=\linewidth]{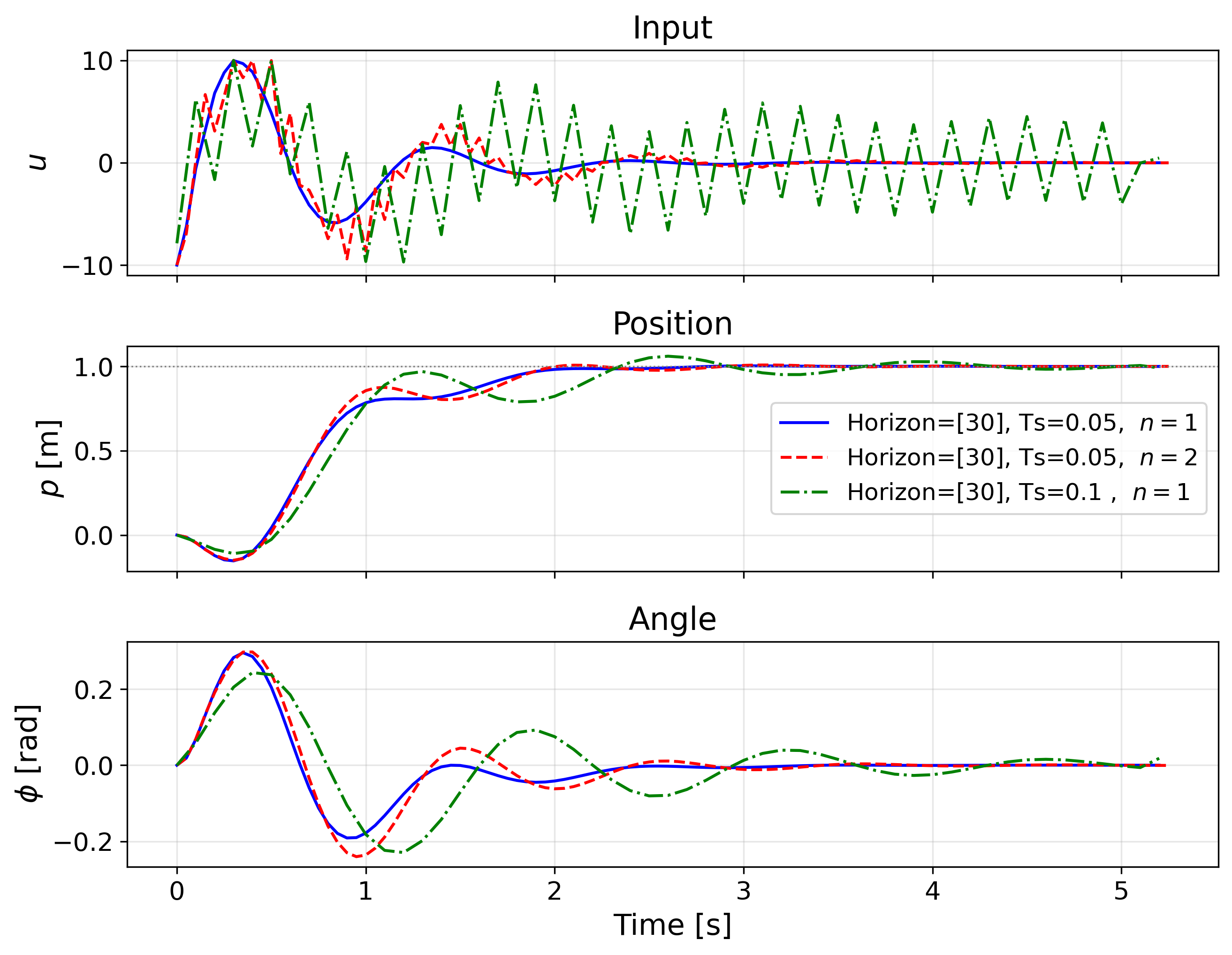}
\caption{Trajectory comparison of standard MPC with different sampling time vs communication rate reduction method.}
\label{fig:sending_rate_comparison}
\end{figure}

As shown in Fig.~\ref{fig:sending_rate_comparison}, introducing the communication rate parameter $n$ has a similar performance as the baseline {($n=1$)} while reducing the communication frequency by a factor of two. In contrast, increasing $T_s$ results in a clear degradation of control performance, confirming the advantages of the rate-reduction strategy.

\subsection{Multi-horizon MPC vs. uniform-horizon MPC} We evaluate performance of the MH-MPC formulation. The discretization $H = [5,4,3,2]$ results in 14 control steps and is compared with two UH-MPC baselines:
(i) a short-horizon controller with $H=[5]$ (same packet size), and
(ii) a long-horizon controller with $H=[30]$ (same prediction length in physical time).
Fig.~\ref{fig:multiple_trajectoris} shows closed-loop trajectories while tracking a constant reference. The MH-MPC is able to bring the system to the desired reference with a performance comparable to the UH-MPC choice. The short UH-MPC ($H=[5]$) exhibits the worst control performance and is not able to stabilize the plant. 
Note that the discontinuities in the input sequence are caused by packet loss over the network.
The computational times for the different MPCs are shown in Table \ref{table:CPT_comparison}. On average, the computational time of MH-MPC is approximately $5\,\text{ms}$ lower than for UH-MPC with $H =[30]$. However, MH-MPC cannot be solved as quickly as UH-MPC with $H=[5]$.

\begin{figure}
\centering
\includegraphics[width=\linewidth]{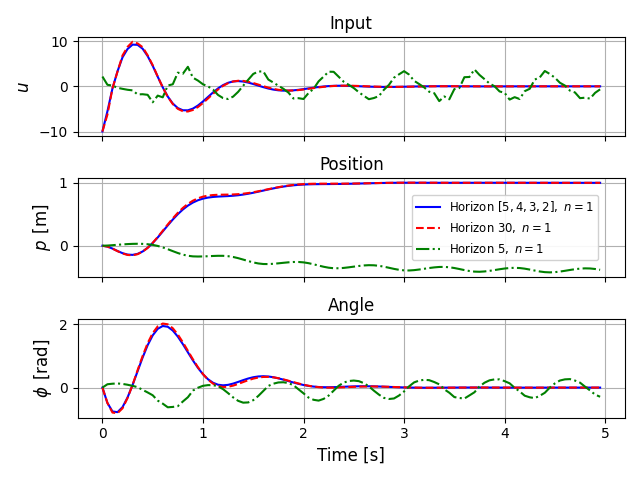}
\caption{Trajectories with different horizon choices}
\label{fig:multiple_trajectoris}
\end{figure}

\begin{table}
\centering
\caption{Computational Time Comparison}
\label{table:CPT_comparison}
\renewcommand{\arraystretch}{1.15}
\begin{tabular}{@{} l S[table-format=2.2] S[table-format=1.2] @{}}
\toprule
Horizon & {CPT mean (\si{\milli\second})} & {Std deviation (\si{\milli\second})} \\
\midrule
{[5,4,3,2]} & 11.36 & 1.07 \\
30 & 16.54 & 2.90 \\
5 & 9.95 & 2.01 \\
\bottomrule
\end{tabular}
\end{table}

\begin{table*}
\centering
\caption{Uplink Congestion - 8 Mbps}
\label{tab:experiment-results}
\renewcommand{\arraystretch}{1.2}
\begin{tabular}{
  c % Sending rate (kept simple to work well with \multirow)
  c % Horizon
  S[table-format=2.0] % Experiments successful
  S[table-format=2.2(2.2), separate-uncertainty] % Uplink loss mean (%)
  S[table-format=2.2(2.2), separate-uncertainty] % Downlink loss mean (%)
  S[table-format=2.2(2.2), separate-uncertainty] % Avg CPT time (ms)
  S[table-format=1.2(1.2), separate-uncertainty] % MSE mean
}
\toprule
Comunication rate & Horizon & {Experiments successful} & {Uplink loss mean (\%)} & {Downlink loss mean (\%)} & {Avg CPT (ms)} & {MSE mean} \\
\midrule
\multirow{2}{*}{1} & {[5,4,3,2]} & 20 & 41.54 \pm 14.62 & 15.23 \pm 4.22  & 11.38 \pm 2.30 & 0.17 \pm 0.00 \\
                   & 30           & 20 & 31.86 \pm 20.37 & 36.21 \pm 16.33 & 16.92 \pm 3.19 & 0.19 \pm 0.01 \\
\midrule
\multirow{2}{*}{2} & {[5,4,3,2]} & 13 & 0.85  \pm 2.26  & 0.65  \pm 2.13  & 11.05 \pm 2.31 & 0.18 \pm 0.00 \\
                   & 30           & 12 & 1.39  \pm 3.06  & 8.43  \pm 16.69 & 17.04 \pm 3.5  & 0.81 \pm 1.42 \\
\bottomrule
\end{tabular}
\label{table:congestion}
\end{table*}

\subsection{Network-load analysis}
We now run multiple controllers and plants simultaneously. This results in higher network load and makes the effect of the bandwidth reduction parameters more visible. In the following, $20$ controller-plant pairs are run simultaneously.

\begin{figure}
  \centering
  \includegraphics[width=0.75\linewidth]{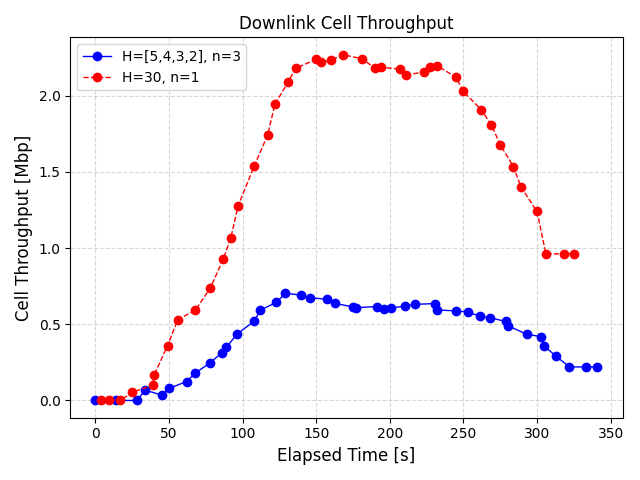}\\[-0.5em]
  \includegraphics[width=0.75\linewidth]{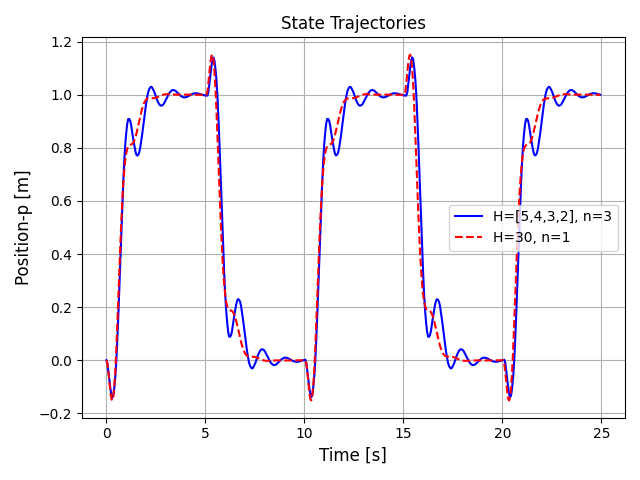}
  \caption{Data represents 20 plant-controller pairs executed simultaneously. The initial and final slope in the throughput reflects the startup time required to initialize all plant-controller pairs. }
  \label{fig:BB_data}
\end{figure}

Using 5G base-band telemetry, we monitor cell throughput.
The following choice of horizon/communication rate combinations was used: $H=30, \ n=1$ and ${H=[5,4,3,2], \ n=3}$. The resulting network metrics are shown in Fig.~\ref{fig:BB_data} along with the corresponding trajectories.
We observe that the bandwidth is approximately reduced by a factor of $n=3$, while tracking performance is only marginally affected. 
The packet length has no discernible impact on bandwidth consumption because the payload is small relative to the size of the packet headers. 
With communication protocols that minimize header overhead, packet length may become more critical for the bandwidth.

To emulate bandwidth-intensive workloads, we saturate the robot-to-server (uplink) channel with 8 Mbps of UDP traffic. This can represent continuous onboard video streaming from camera sensors. This load stresses the communication path, increasing end-to-end latency, causing packet loss. 
Notably, uplink congestion also increases the effective packet loss observed on the server-to-robot (downlink) connection. 
The mechanism is indirect: higher uplink delays postpone the controller’s receipt of plant measurements, delaying the start of the optimization problem and shrinking the time budget available to compute and transmit the corresponding control packet. 
Packets arriving at the plant after their deadline are treated as lost, increasing the downlink loss rate. 

We conducted twenty experiments under two horizon configurations: a MH-MPC set $H=[5,4,3,2],\  n \in \{1,2\}$ and UH-MPC with $H=[30], \ n \in \{1,2\}$. 
We define experiment failure as the inability of the controller to track the reference or stabilize the plants.
The results are summarized in Table \ref{table:congestion}. 
As shown in the third column, increasing the communication rate parameter $n$ reduces the success rate.
This is due to the system operating in open loop for a long periods of time, which can exacerbate the effect of model mismatch. 
For both horizon settings, reducing the communication rate markedly lowers the packet-loss percentage. The MH formulation achieves the lowest downlink packet loss, consistent with its shorter computation times reported in column 6; the reduced solve time increases the slack for timely downlink delivery before deadlines. The MH case also yields lower MSE, attributable to earlier activation of the local controller enabled by smaller packet sizes, thus limiting the duration of open-loop operation.

\section{Conclusions}
 We have shown:
(i) the proposed communication rate reduction strategy achieves better control performance than naively increasing the sampling time;
(ii) the proposed multi-horizon MPC formulation enhances performance while reducing computational effort compared to a standard MPC with uniform discretization and 
(iii) the strategy enables a significant down-link bandwidth reduction over 5G;

We have shown the critical role of computational time in networked control systems and proposed methods to reduce both bandwidth usage and computation times without compromising control performance. Future work could focus on leveraging the time gained from reduced communication rates, for instance, by pre-computing control solutions. Another interesting aspect to be investigated could be the development of adaptive algorithms, capable of tuning system parameters online in response to varying network conditions. 

\section*{Acknowledgment}
The authors would like to thank Luca Schenato for taking the first step at initiating this fruitful collaboration, and well as for insights and timely feedback. The AI model GPT-5 was used to polish the text, all output was carefully reviewed.

\bibliography{ref}

\end{document}